\newcommand{\myignore}[1]{}
\newenvironment{description*}%
  {\vspace{-1ex}\begin{description}%
    \setlength{\itemsep}{-0.5ex}%
    \setlength{\parsep}{0pt}}%
  {\end{description}}
\newenvironment{itemize*}%
  {\vspace{-1ex}\begin{itemize}%
    \setlength{\itemsep}{-0.5ex}%
    \setlength{\parsep}{0pt}}%
  {\end{itemize}}
\newenvironment{enumerate*}%
  {\vspace{-1ex}\begin{enumerate}%
    \setlength{\itemsep}{-0.5ex}%
    \setlength{\parsep}{0pt}}%
  {\end{enumerate}}
 \gdef\xxxmark{%
   \expandafter\ifx\csname @mpargs\endcsname\relax 
     \expandafter\ifx\csname @captype\endcsname\relax 
       \marginpar{xxx}
     \else
       xxx 
     \fi
   \else
     xxx 
   \fi}
 \gdef\xxx{\@ifnextchar[\xxx@lab\xxx@nolab}
 \long\gdef\xxx@lab[#1]#2{{\bf [\xxxmark #2 ---{\sc #1}]}}
 \long\gdef\xxx@nolab#1{{\bf [\xxxmark #1]}}
\newtheorem{theorem}{Theorem}
\newtheorem{lemma}[theorem]{Lemma}
\newtheorem{corollary}[theorem]{Corollary}
\newtheorem{proposition}[theorem]{Proposition}
\newcommand{\Patrascu}{P\v{a}tra\c{s}cu}
\newcommand{\ceiling}[1]{\lceil #1\rceil}
\newcommand{\eps}{\varepsilon}
\newcommand{\calD}{\mathcal{D}}
\newcommand{\calX}{\mathcal{X}}
\newcommand{\calY}{\mathcal{Y}}
\newcommand{\calZ}{\mathcal{Z}}
\newcommand{\cl}[1]{\ensuremath {\sf #1}}
\newcommand{\E}{\mathbb{E}}
\newcommand{\ignore}[1]{}
\title{Finding the Median (Obliviously) with Bounded Space}
\author{
     Paul Beame\thanks{University of Washington.  Research supported by NSF grants CCF-1217099 and CCF-0916400}
\and Vincent Liew\thanks{University of Washington.  Research supported by NSF grant CCF-1217099}
\and Mihai P\v{a}tra\c{s}cu\thanks{Much of this work was done with Mihai in
2009 and 2010 when the lower bounds for oblivious algorithms were obtained.  
This paper is dedicated to his memory.}
}
\begin{document}

\maketitle

\begin{abstract}
We prove that any oblivious algorithm using space $S$ to find the
median of a list of $n$ integers from $\{1,\ldots,2n\}$ requires time
$\Omega(n\log\log_S n)$.  This bound also applies to the problem of
determining whether the median is odd or even.
It is nearly optimal since Chan, following Munro and Raman, has shown
that there is a (randomized) selection algorithm using only $s$ registers,
each of which can store an input value or $O(\log n)$-bit counter,
that makes only $O(\log\log_s n)$ passes over the input.
The bound also implies a size lower bound for read-once branching
programs computing the low order bit of the median and
implies the analog of $\cl{P}\ne \cl{NP}\cap\cl{coNP}$ for 
length $o(n\log\log n)$ oblivious branching programs.
\end{abstract}

\section{Introduction}

The problem of selection or, more specifically, finding the median of a
list of values is one of the most basic computational problems.
Indeed, 
the classic deterministic linear-time median-finding algorithm of \cite{bfprt:median},
as well as the more practical expected linear-time randomized algorithm
QuickSelect are among the most widely taught algorithms.

Though these algorithms are asymptotically optimal with respect to time, 
they require substantial manipulation and re-ordering of the input during
their execution. 
Hence, they require the ability to write into a linear number of memory
cells. (These algorithms can be implemented with only $O(1)$ memory locations
in addition to the input if they are allowed to overwrite the input memory.)
In many situations, however, the input is stored separately and 
cannot be overwritten unless it is brought into working
memory.
The number of bits $S$ of working memory that an algorithm with read-only
input uses is its {\em space}.
This naturally leads to the question of the tradeoffs between the time $T$
and space $S$ required to find the median, or for selection more generally.  

Munro and Paterson~\cite{mp:selectsort-journal} gave multipass algorithms that
yield deterministic time-space tradeoff upper bounds for selection for small
space algorithms and showed that the number of passes $p$ must be
$\Omega(\log_s n)$ where $S=s\log_2 n$.
Building on this work,
Frederickson~\cite{frederickson:sortselect-journal} extended the range of
space bounds
to nearly linear space, deriving a multipass algorithm achieving
a time-space tradeoff of the form $T=O(n\log^* n+n\log_s n)$.
In the case of randomly ordered inputs, Munro and Raman~\cite{mr:selectrom-journal} showed that 
on average an even better upper bound of $p=O(\log\log_s n)$ passes and
hence $T=O(n\log\log_s n)$ is possible.
Chakrabarti, Jayram, and \Patrascu~\cite{cjp:median} showed that this is
asymptotically
optimal for multipass computations on randomly ordered input streams. 
Their analysis also applied to algorithms that perform arbitrary operations
during their execution.  

Chan~\cite{chan:selection-journal} showed how to extend the ideas of Munro
and Raman~\cite{mr:selectrom-journal} to
yield a randomized median-finding algorithm achieving the same time-space
tradeoff upper bound as in the average case that they analyze.  The resulting
algorithm, like all of those discussed so far, only accesses its input using
comparisons.    Chan coupled this algorithm with a corresponding time-space
tradeoff lower bound of $T=\Omega(n\log\log_S n)$ for 
randomized comparison branching programs, which implies the same
lower bound for the randomized comparison RAM model.  This is the first lower
bound for selection allowing more than multipass access to the input;
the input access can be input-dependent but the algorithm must
base all its decisions on the input order.   
Though a small gap remains because $S\ne s$,
the main question left open by~\cite{chan:selection-journal} is that of
finding time-space tradeoff lower bounds for median-finding algorithms that
are not restricted to the use of comparisons.   

\vskip 1ex
\begin{sloppypar}
\noindent{\it Comparison-based versus general algorithms\ }
Though comparison-based algorithms for selection may be natural, when the input
consists of an array of $O(\log n)$-bit integers, as one often assumes, 
there are natural alternatives to comparisons such as hashing that might
potentially yield more efficient algorithms.   Though comparison-based
algorithms match the known time-space tradeoff lower bounds in efficiency for
sorting when time $T$ is
$\Omega(n\log n)$~\cite{bc:sorting,bea:sorting,pr:comparison-sorting},
they are powerless in the regime when $T$ is $o(n\log n)$.
Moreover, if one considers the closely related problem of element distinctness,
determining whether or not the input has duplicates, the known time-space
tradeoff lower bound of $T=\Omega(n^{2-o(1)}/S)$ for (randomized) comparison
branching programs~\cite{yao:ED-journal} can be beaten for 
$S$ up to $n^{1-o(1)}$ by an algorithm using hashing~\cite{bcm:windowsED} that
achieves $T=\tilde O(n^{3/2}/S^{1/2})$\footnote{We use $\tilde O$ and $\tilde \Omega$ notations to hide logarithmic factors.}.
Therefore, the restriction to comparison-based algorithms can be a significant
limitation on efficiency.
\end{sloppypar}

\vskip 1ex
\begin{sloppypar}
\noindent{\it Our results\ }
We prove
a tight $T=\Omega(n\log\log_S n)$ lower bound for median-finding using
arbitrary oblivious algorithms.  
Oblivious algorithms are those that can access the data in any order, not just
in a fixed number of sweeps across the input, but that order cannot be data
dependent.
Our lower bound applies even for the decision problem of computing
{\sc MedianBit}, the low order bit of the median,
when the input consists of $n$ integers chosen from $\{1,\ldots, 2n\}$.
This bound substantially generalizes the lower bound of~\cite{cjp:median} for
multipass median-finding algorithms.  
Though our lower bound does not apply when there is input-dependent access to
the input, it allows one to hash the input data values into working storage,
and to organize and manipulate working storage in arbitrary ways.
\end{sloppypar}

The median can be computed by a simple nondeterministic
oblivious read-once branching program of polynomial size that guesses
and verifies which input integer is the median. 
When expressed in terms of size for time-bounded oblivious branching
programs our lower bound therefore shows that for every time bound $T$ that is
$o(n\log\log n)$,
{\sc MedianBit} and its complement have nondeterministic oblivious branching
programs of polynomial size but {\sc MedianBit} requires super-polynomial size
deterministic oblivious branching programs, hence separating
the analogs of \cl{P} from $\cl{NP}\cap\cl{coNP}$.

We derive our lower bound using a reduction from a new communication 
complexity lower bound for two players to find
the low order bit of median of their joint set of input integers in a bounded
number of rounds.
The use of communication complexity lower bounds in the ``best partition" model
to derive lower bounds for oblivious algorithms is not new, but the necessity
of bounded rounds is.
We derive our bound via a round-preserving reduction
from oblivious computation to best-partition communication
complexity~\cite{okol93,ajtai:nondetBP-journal}.  This reduction is
asymptotically less efficient than the reductions of~\cite{am:meanders,brs93}
but the latter do not preserve the number of rounds, which is essential here
since there is a very efficient $O(\log n)$-bit communication protocol using
an unbounded number of rounds~\cite{kn97}.
Moreover, the loss in efficiency does not prevent us from achieving
asymptotically optimal lower bounds.

We further show that the fact that the median function is symmetric in its
inputs implies that our oblivious branching program lower bound also applies
to the case of non-oblivious read-once branching programs.
Ideally, we would like to extend our non-oblivious results to larger time
bounds.
However, we show that extending our lower bound even to read-twice branching
programs in the non-oblivious case would require fundamentally new lower
bound techniques.
The hardness of the median problem is essentially that of a decision
problem:
Though the median problem has $\Theta(\log n)$ bits of output, the high order
bits of the median are very easy to compute; it is really the low order
bit, {\sc MedianBit}, that is the hardest to produce and encapsulates all of
the difficulty of the problem. 
Moreover, all current methods for time-space tradeoff lower bounds for
decision problems on general branching programs, and indeed for read-$k$
branching programs for $k>1$, also apply to nondeterministic algorithms
computing either the function or its complement and hence cannot apply to the
median because it is easy for such algorithms.

\section{Preliminaries}

Let $D$ and $R$ be finite sets.  We first define branching programs that compute
functions $f:D^n\rightarrow R$:
A \textit{$D$-way branching program} is a
connected directed acyclic multigraph with special nodes: the
\textit{source node} and possibly many \textit{sink nodes}, a sequence of
$n$ input values and one output.
Each non-sink node is labeled with an input index and every edge is
labeled with a symbol from $D$, which corresponds to the value of
the input indexed at the originating node; there is precisely one
out-edge from each non-sink node labeled by each element of $D$.  
We assume that each sink node is labeled by an element of $R$.
The time $T$ required by a branching program is the length of the longest
path from the source to a sink and the space $S$ is $\log_2$ of the number
of nodes in the branching program.
A branching program is {\em leveled} iff all the paths from the source to any
given node in the program are of the same length; a branching program can be
leveled by adding at most $\log_2 T$ to its space.

A branching program $B$ computes a function $f_B:D^n\rightarrow R$ by
starting at the source and then proceeding along the nodes of the
graph by querying the input locations associated with each node and following
the corresponding edges until it reaches a sink node; the label of the sink
node is the output of the function.

A branching program is {\em oblivious} iff on every path from the 
source node to a sink node, the sequence of input indices is precisely the same.
It is (syntactic) {\em read-$k$} iff no input index appears more than $k$
times on any path from the source to a sink. 

\ignore{
A {\em randomized $D$-way branching program} is a probability distribution
$\mathcal B$ over (deterministic) $D$-way branching programs.   It is said
to compute $f$ with probability at most $\varepsilon$-error iff for every 
$x\in D^n$, the probability over $B\sim \mathcal{B}$ that $B(x)=f(x)$
is at least  $1-\varepsilon$.

Given a probability distribution $\mu$ on $D^n$, a deterministic branching
program $B$ computes $f$ with (distributional) error at most $\varepsilon$ iff
the probability over $x\sim \mu$ that $f_B(x)=f(x)$ is at least $1-\varepsilon$.
Yao's lemma implies that lower bounds for the distributional error for fixed
time and space parameters imply lower bounds for these on randomized branching
programs with the same error.
}

Branching programs can easily simulate any sequential model of computation
using the same time and space bounds.   In particular branching programs
using time $T$ and space $S$ can simulate random-access machine (RAM) 
algorithms using time $T$ measured in the number of input locations queried
and space $S$ measured in the number of bits of read/write storage required.  
The same applies to the simulation of randomized RAM algorithms by
randomized branching programs.

We also find it useful to discuss nondeterministic branching programs for
(non-Boolean) functions, which simulate nondeterministic RAM algorithms for 
function computation.  These have the property that
multiple outedges from a single node can have the same label and outedges for
some labels may not be present.   
Every input must have at least one path that leads to a sink and
all paths followed by an input vector that lead
to a sink must lead to the same one, whose label is the output value of the
program. 
This is different from the usual version for decision problems in which one only
considers accepting paths and infers the output value for those that are not
accepting.  When we consider Boolean functions we will typically assume the
usual version based on accepting paths only.

We consider bounded-round versions of deterministic and randomized
two-party communication
complexity in which two players Alice and Bob receive $x\in \calX$ and
$y\in \calY$ and cooperate to
compute a function $f:\calX\times\calY\rightarrow \calZ$.  
A round in a protocol is a maximal segment of communication in which the
player who speaks does not change.
For a distribution $\calD$ on $\calX\times\calY$, we say that a 2-party
deterministic communication protocol computes $f$ with error at most
$\varepsilon<1/2$ under $\calD$ iff the probability over $\calD$ that the
output of the
protocol on input $(x,y)\sim \calD$ is equal to $f(x,y)$ is at least
$1-\varepsilon$.
As usual, via Yao's lemma, for any such distribution $\calD$, the minimum
number of bits communicated by any deterministic protocol that computes $f$
with error at most $\varepsilon$ is a lower bound on the number of bits
communicated by any (public coin) randomized protocol that computes $f$ with
error at most $\varepsilon$.

We say that a 2-party deterministic communication protocol has parameters
$[P, \eps; m_1, m_2, \dots]$ for $f$ over a distribution $\calD$ if:
\begin{itemize*}
\item the first player to speak is $P \in \{ A, B \}$;
\item it has error $\eps < \frac{1}{2}$ under input distribution $\calD$;
\item the players alternate turns, sending messages of $m_1, m_2, \dots$ bits,
respectively.
\end{itemize*}

For probability distributions $P$ and $Q$ on a domain
$U$, the statistical distance between $P$ and $Q$, is
$||P-Q||=\max_{A\subseteq U} |P(A)-Q(A)|$, which is 1/2 of the $L_1$ distance
between $P$ and $Q$.  
Let $\log$ denote $\log_2$ unless otherwise specified. i
Let $H(X)$ be the binary entropy of random variable $X$,
$H(X|Y)=\E_{y\sim Y} H(X|Y{=}y)$,
and let $I(X;Y|Z)$ be the mutual information between random variables $X$ and
$Y$ conditioned on random variable $Z$.
We have $I(X;Y|Z)\le H(X|Z)\le H(X)$.

\section{Round Elimination}

Let $f : \calX \times \calY \to \{0,1\}$ and consider a distribution
$\calD$ on $\calX \times \calY$. 
We define the 2-player communication problem $f^{[k]}$ as
follows: Alice receives $x \in \calX^k$, while Bob receives $y \in
\calY^k$ and $j\in [k]$; together they want to find $f(x_j, y_j)$.  
Also, given $\calD$ we define an input distribution $\calD^{[k]}$ for
$f^{[k]}$ by choosing each $(x_i, y_i)$ pair 
independently from $\calD$, and independently choosing $j$ uniformly from
$[k]$.

The following lemma is a variant of standard techniques and was suggested
to us by Anup Rao; its proof is in the appendix for completeness.

\begin{sloppypar}
\begin{lemma}    \label{lem:rel}
Assume that there exists a 2-party deterministic protocol for $f^{[k]}$ with
parameters $[A,\eps; m_1, m_2, m_3, \dots \big]$ over
$\calD^{[k]}$ where $m_1 = \delta^2 k/(8\ln 2)$.
Then there exists a 2-party deterministic protocol for $f$ with parameters
$\big[B, \eps + \delta; m_2, m_3, \dots \big]$ over $\calD$.
\end{lemma}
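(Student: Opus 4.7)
The plan is to construct $\Pi'$ by embedding the real single-copy input at a uniformly random coordinate $j \in [k]$ of a simulated $k$-fold instance, with the other $k-1$ pairs $(X_i, Y_i) \sim \calD$ drawn from shared randomness. The point is to eliminate Alice's first message entirely: Alice and Bob agree (via the shared randomness) on a surrogate first message $M_1^{\ast}$, and then run the remaining rounds of $\Pi$ starting from $M_2$, which is Bob's — giving a protocol in which Bob speaks first, using $m_2, m_3, \dots$ bits. For the surrogate distribution I would take $M_1^{\ast}$ to be sampled from the marginal distribution of $M_1$ under $\calD^{[k]}$; this marginal depends only on the protocol, so it can be computed from shared randomness without reference to Alice's input.

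The error analysis rests on a standard information-theoretic bound. Under $\calD^{[k]}$ the coordinates $X_1,\dots,X_k$ are independent, and $M_1$ is a deterministic function of them with $H(M_1) \le m_1$, so by subadditivity of mutual information
$$\sum_{j=1}^k I(X_j; M_1) \;\le\; I(X_1,\dots,X_k;M_1) \;=\; H(M_1) \;\le\; m_1,$$
and therefore $\E_j\,I(X_j;M_1) \le m_1/k = \delta^2/(8\ln 2)$. Pinsker's inequality (with natural-log divergences) together with Jensen applied to the concave square root then yields
$$\E_j\,\E_{X_j}\,\bigl\| P(M_1 \mid X_j) - P(M_1) \bigr\|_{\mathrm{TV}} \;\le\; \sqrt{\tfrac{\ln 2}{2}\cdot\tfrac{m_1}{k}} \;=\; \delta/4.$$
This closeness between the real conditional distribution of $M_1$ (given $X_j$) and the surrogate marginal, combined with the fact that the output of $\Pi$ on the embedded coordinate depends on the input only through $(X_j,Y_j)$, bounds the error increase caused by replacing the real first message by $M_1^{\ast}$: averaged over $j$ and the shared randomness, the error of the modified protocol on $\calD$ is at most $\eps + \delta$.

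Finally I would derandomize by averaging: there exist fixed choices of $j$, the padding pairs $(X_i, Y_i)_{i \ne j}$, and the surrogate message $M_1^{\ast}$ that achieve error at most $\eps + \delta$ under $\calD$, giving a deterministic protocol with parameters $[B,\eps+\delta; m_2, m_3,\dots]$. The main delicate step is turning the per-coordinate information bound into an error bound on the entire simulated protocol. A direct Pinsker bound on the joint distribution of $(X_{[k]},Y_{[k]},M_1)$ only gives $O(\sqrt{m_1}) = O(\delta\sqrt{k})$, losing the $\sqrt{k}$-factor savings; recovering the needed $\delta$ bound requires picking an averaged-good coordinate $j$ and exploiting that only the single-coordinate mutual information $I(X_j;M_1) \le m_1/k$ actually governs the behavior of the output restricted to $(X_j,Y_j)$.
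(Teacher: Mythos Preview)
There is a genuine gap in the construction. You sample the padding $(X_i,Y_i)_{i\ne j}$ and the surrogate $M_1^{\ast}\sim P(M_1)$ \emph{independently} from shared randomness, but in the real protocol $M_1$ is a deterministic function of $X_{[k]}$ and hence is strongly correlated with $X_{-j}$. Concretely, take $k=2$ and $M_1=X_1\oplus X_2$ with $X_1,X_2$ uniform bits: then $I(M_1;X_j)=0$ for each $j$ and your Pinsker bound is $0$, yet an independently drawn $M_1^{\ast}$ agrees with the true $X_1\oplus X_2$ only half the time, and the remainder of the protocol (which sees both $X_{[k]}$ and $M_1^{\ast}$) can go arbitrarily wrong. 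The quantity $\E_{X_j}\|P(M_1\mid X_j)-P(M_1)\|$ you bound simply does not control the statistical distance between the simulated joint law of $(X_{[k]},Y_{[k]},j,M_1^{\ast})$ and the real one. Your claim that ``the output of $\Pi$ on the embedded coordinate depends on the input only through $(X_j,Y_j)$'' is exactly the step that fails: the \emph{target} $f(X_j,Y_j)$ depends only on coordinate $j$, but Alice does not know $j$, and her later messages depend on all of $X_{[k]}$ together with the consistency of $M_1^{\ast}$ with $X_{-j}$.

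The standard repair is to have Alice \emph{privately} sample $X_{-j}$ from $P(X_{-j}\mid M_1{=}M_1^{\ast},\,X_j{=}x)$; then the only discrepancy left is $M_1^{\ast}\sim P(M_1)$ versus $P(M_1\mid X_j{=}x)$, and that \emph{is} controlled by your Pinsker bound. However, this fix works only when $\calD$ is a product distribution on $\calX\times\calY$, since otherwise Bob cannot generate $Y_{-j}$ with the correct correlation to Alice's privately sampled $X_{-j}$. The lemma is stated for arbitrary $\calD$ (and the median application uses a non-product $\calD$: within each recursive subproblem Alice's and Bob's shell counts are coupled). The paper handles this via a genuinely different mechanism: it introduces the auxiliary variables $W_i\in\{X_i,Y_i\}$, bounds $I(M,W^{-i};X_i\mid Y_i)+I(M,W^{-i};Y_i\mid X_i)$ on average over $i$, and then invokes Holenstein's correlated-sampling lemma so that Alice (knowing $x$) and Bob (knowing $y$) can jointly sample $(M_1^{\ast},W^{-i})$ from approximately $P(M,W^{-i}\mid X_i{=}x,Y_i{=}y)$ without communication. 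Only after this joint sample do the players privately complete their remaining coordinates.
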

\end{sloppypar}

The intuition for this lemma is that, since $f^{[k]}$ has $k$ independent copies
of the function $f$ and Alice's first message has length at most $m_1$ which
is only a small fraction of $k$, there must be some copy of $f$ on which $B$
learns very little information.  This is so much less than one bit that $B$
could forego this information in computing $f$ and still only lose $\delta$
in his probability of correctness.
The quadratic difference between the number of bits of information per copy,
$\delta^2/(8\ln 2)$, and the probability difference, $\delta$, comes from
Pinsker's inequality which relates information and statistical distance.

\section{The Bounded-Round Communication Complexity of\\ (the Least-Significant Bit of) the Median}

We consider the complexity of the following communication
game.  Given a set $A$ of $n$ elements from $[2n]$ partitioned equally between
Alice and Bob, determine the least significant bit of the median of $A$.
(Since $n$ must be even in order for $A$ to be partitioned evenly, we take the
median to be $n/2$-th largest element of $A$.)
We consider the number of rounds of communication required when the length
of each message is at most $m$ for any $m\ge \log n$.

\subsubsection*{A Hard Distribution on Median Instances}

For our hard instances we first define a pairing of the elements of $[2n]$ that
depends on the value of $m$.  The set $A$ will include precisely one
element from each pair.
For the input to the communication problem, we randomly partition the pairs
equally between the two players which will therefore also automatically equally
partition the set $A$.
We then show how to randomly choose one element from each pair to include
in $A$.  

In the construction, we define the pairing of $[2n]$ recursively; the
parameters of each recursive pairing will depend on the initial value $n_0$ of
$n$.
Let $k=k(m,n_0) =  m\log^2 n_0$.
If $\sqrt{n}<k\log^3 n_0$ then the elements of $[1,2n]$ are simply paired
consecutively.
If $\sqrt{n}\ge k\log^3 n_0$ then the pairing of $[2n]$ consists of a ``core''
of $\gamma = \sqrt{n} / \log^2 n_0$ pairs, plus $n-\gamma$ ``shell'' pairs
on $[1,n-\gamma]\cup [n+1+\gamma,2n]$.  
In the shell, $i$ and $2n+1-i$ are paired.
The core pairs are obtained by embedding $k$ recursive instances (using 
the same values of $m$ and $n_0$) of
$n'=\frac{\gamma}{k}$ pairs each on consecutive sets
of $\frac{2\gamma}{k}$ elements, and placing them back-to-back in the value
range $[n-\gamma + 1, n+\gamma]$, see Figure~\ref{fig:shell-core}.
The size of the problem at each level of recursion decreases from
$n$ to $n'=\gamma/k = \sqrt{n}/ (m\log^4 n_0)$.
\begin{figure}[t]
\centering
\includegraphics[scale = 0.25]{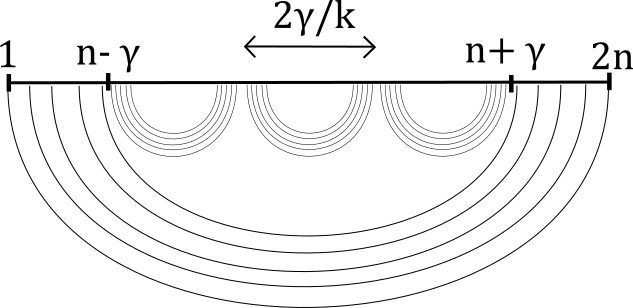}
\caption{\small Recursive construction of the pairing for the hard instances.}
\label{fig:shell-core}
\end{figure}
In determining the median, the only relevant information about the
shell elements is how many are below $n$; let this number be
$\frac{n}{2} - x$. If $x \in [1, \gamma]$, the median of the
entire array $A$ will be the $x$-th order statistic of the core.

If furthermore, $x = \frac{\gamma}{k} ( j - \frac{1}{2} )$ for an
integer $j$, the median of $A$ will be exactly the median of the
$j$-th embedded subproblem.  In our distribution of hard instances, we will
ensure that $x$ has this nice form.

Formally, the distribution $\calD^n_{m,n_0}$ of the hard instances $A$ of size
$n$ on $[2n]$ is the following.
Generate $k$ recursive instances on $\calD^{\gamma/k}_{m,n_0}$ and
place shifted versions of them back-to-back inside the core.
Choose $j \in [k]$ uniformly at random.
Choose $\frac{n}{2} - \frac{\gamma}{k} (j - \frac{1}{2})$ uniformly random
shell elements in $[1,n-\gamma]$ to include in $A$; 
for every $i \in [1,n-\gamma]\setminus A$, we have
$2n+1-i\in A$.
This will ensure that the median of $A$ is precisely the median of the
$j$-th recursive instance inside the core.

\begin{sloppypar}
Initially we have $n=n_0$ and the recursion only continues when
$\gamma=\sqrt{n}/\log^2 n_0\ge k\log n_0$, so in the base case we have
at least $\log n_0$ elements.
In this case, the $i$-th element is chosen randomly and uniformly
from the paired elements $2i-1$ and $2i$ and so
the least significant bit of the median is uniformly chosen
in $\{0,1\}$.
\end{sloppypar}

\begin{sloppypar}
The size of the problem after $t$ levels of recursion
remains at least $n_0^{1/2^t}/(m\log^4 n_0)^{2-1/2^{t-1}}$ and our definition
gives at least $t$ levels provided that this size
$n_0^{1/2^t}/(m\log^4 n_0)^{2-1/2^{t-1}}\ge \log n_0$; i.e., 
$n_0\ge m^{2^{t+1}-2} \log^{9\cdot 2^t-2}n_0$.
We will show that after one message for each level of recursion, the answer
is still not determined.  
\end{sloppypar}

The general idea of the lower bound is that each round of communication, which
consists of at most $m$ bits and is much smaller than the branching factor $k$,
will give almost no information about a typical recursive subproblem in the
core.   

We use the round elimination lemma to make this precise, and with it derive
the following theorem:
\begin{sloppypar}
\begin{theorem}
\label{thm:cc}
If, for $A$ chosen according to $\calD^n_{m,n}$ and partitioned randomly, Alice and Bob determine the least
significant bit of the median of $A$ with bounded error $\varepsilon<1/2$ using
$t$ messages of at most $m \ge \log n$ bits each, then 
$m^{2^{t+1}-2}>n/ \log^{9\cdot 2^t-2}n$, which implies that
$t\ge \log\log_m n - c$ for some constant $c$.
\end{theorem}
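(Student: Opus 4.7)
The plan is to prove Theorem~\ref{thm:cc} by induction on $t$, exploiting the recursive structure of the hard distribution $\calD^n_{m,n}$ together with Lemma~\ref{lem:rel} to strip off one round per level of recursion.

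The key observation is that $\calD^n_{m,n_0}$ is naturally an instance of the pointer problem $f^{[k]}$, where $f$ is the median-LSB problem on $\calD^{n'}_{m,n_0}$, with $k=m\log^2 n_0$ and $n'=\sqrt{n}/(m\log^4 n_0)$: the core consists of $k$ independent copies drawn from $\calD^{n'}_{m,n_0}$, and the shell is arranged so that the median of the whole instance coincides with the median of the $j$-th copy, with $j$ uniform in $[k]$. I would formalize a round- and message-length-preserving reduction that converts any $t$-round $[P,\eps;m_1,\ldots,m_t]$-protocol for median on $\calD^n_{m,n_0}$ (with $m_1\le m$) into a protocol with the same parameters for $f^{[k]}$ on $\calD^{n'}_{m,n_0}$. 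Given an $f^{[k]}$ input $(x,y,j)$, the players place the $k$ copies into the core and then use shared randomness together with Bob's knowledge of $j$ to sample a shell (and partition) whose joint distribution with the core matches $\calD^n_{m,n_0}$; running the outer protocol returns $f(x_j,y_j)$.

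I would then apply Lemma~\ref{lem:rel} with $k=m\log^2 n_0$ and $\delta=\sqrt{8m\ln 2/k}=O(1/\log n_0)$, obtaining a $(t-1)$-round protocol for median on $\calD^{n'}_{m,n_0}$ with error at most $\eps+\delta$, message lengths $m_2,\ldots,m_t$, and the opposite first speaker. Iterating $t$ times, the recursive size contracts as $n^{(i)}=\sqrt{n^{(i-1)}}/(m\log^4 n_0)$, giving after $t$ levels a size $n^{(t)}=n^{1/2^t}/(m\log^4 n_0)^{2-1/2^{t-1}}$, while the accumulated error grows only by $t\cdot O(1/\log n)=o(1)$ whenever $t=O(\log\log n)$. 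For the base case, once the recursion stops (i.e., $\sqrt{n^{(t)}}<k\log^3 n_0$), the construction pairs $\{2i-1,2i\}$ with a uniform choice in each pair, so the median's LSB is uniformly distributed and any $0$-round protocol has error exactly $1/2$. Combining, the assumed $t$-round protocol forces the recursion to have terminated within $t$ levels, which rearranges to the stated bound $m^{2^{t+1}-2}>n/\log^{9\cdot 2^t-2}n$, and hence to $t\ge\log\log_m n - c$.

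The main obstacle I expect is the embedding step: the shell of $\calD^n_{m,n_0}$ has a small-element count $n/2-\gamma(j-1/2)/k$ that depends on $j$, but in the $f^{[k]}$ problem $j$ is known only to Bob, while the shell is split between both players in the outer problem. Alice must therefore sample her shell half without learning $j$, while the joint distribution of both halves and their partition must be exactly the one induced by $\calD^n_{m,n_0}$. A plausible route is to give Alice a canonical shell contribution determined by shared randomness and let Bob's half absorb the entire $j$-dependent offset; showing that this (or a similar construction) yields a valid joint distribution on inputs drawn from $\calD^n_{m,n_0}$ while remaining compatible with the claimed round and message-length preservation is the most delicate step of the proof.
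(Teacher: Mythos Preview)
Your high-level plan matches the paper's: induct on $t$, use the recursive structure of $\calD^n_{m,n_0}$ to view the core as $k$ independent copies with index $j$ known only to Bob, strip one round via Lemma~\ref{lem:rel} at cost $O(1/\log n_0)$ in error, and iterate. Your size recursion, error accumulation, and base case are all right.

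The obstacle you flag is real, and your proposed resolution as stated will not work. You ask for a shell sampling in which Alice's half is determined by shared randomness (hence independent of $j$) while the \emph{joint} distribution of the full shell exactly matches $\calD^n_{m,n_0}$. That is impossible: in $\calD^n_{m,n_0}$ the total count of low shell elements is $n/2-\gamma(j-\tfrac12)/k$, and after a random partition Alice's low-count $a$ is a hypergeometric draw whose law genuinely depends on $j$. No construction that keeps Alice oblivious to $j$ can reproduce this exactly.

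The paper's resolution is to abandon exact matching and control the approximation. Concretely: first fix the partition (arguing it is ``nice'' at every recursion node with probability $1-n_0^{-\Omega(\log n_0)}$). Then \emph{fix} Alice's shell $x^s$ by averaging, choosing a value with $|a-|x^s|/2|\le \sqrt{n}\log n_0$ and small conditional error. Now $j$ is determined by $y^s$ alone, but it is no longer uniform: $\Pr[j=j_0]$ is proportional to $\binom{|y^s|}{\,n/2 - a - \gamma(j_0-\tfrac12)/k\,}$. The crucial calculation is that the ratio between any two of these binomial coefficients is $1+O(\Delta\gamma/n)$ where $\Delta=|a-|x^s|/2|$, and the choice $\gamma=\sqrt{n}/\log^2 n_0$ is exactly what forces $\Delta\gamma/n=O(1/\log n_0)$. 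One then replaces the conditional law of $j$ by the uniform one at an additional $O(1/\log n_0)$ cost, after which Lemma~\ref{lem:rel} applies cleanly. This binomial-ratio step, and the tuning of $\gamma$ that makes it work, is the missing ingredient in your plan.
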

\end{sloppypar}

\subsubsection*{The Partition Between the Players}

To ensure that neither player has enough information to skip a level of
the recursion, we insist that the shell for each subproblem be nicely
partitioned between the two players. 
For any given shell there is a set of $n'>m^2/2\ge  0.5\log^2 n_0$ shell pairs.
Since a player receives a random 1/2 of all pairs, by Hoeffding's inequality,
with probability $2^{-\Omega(n')}$, which is $n_0^{-\Omega(\log n_0)}$,
at least $\frac{n'}{3}$ pairs go to each player.
We can use this to say that with high probability at least 1/3 of all shell
elements at a level go to each player at every level of the recursion:
This follows easily because over all levels of the recursive pairing,
there are only a total of $o(\sqrt{n_0})$ different shells associated with
subproblems and each one fails only with probability $n_0^{-\Omega(\log n_0)}$.

From now on, fix a partition satisfying the above requirement at all
recursion nodes. We will prove a lower bound for any partition
satisfying this property. Since we are discarding $o(1)$ of possible
partitions, the error of the protocol may increase by $o(1)$, which is
negligible.

\subsubsection*{The Induction}

Our proof of Theorem~\ref{thm:cc} will work by induction, using the following
message elimination lemma:

\begin{lemma}
\label{lem:median-elim}
Assume that there is a protocol for the median on instances of size $n$,
with error $\eps$ on $\calD^n_{m,n_0}$ for $\sqrt{n}\ge k\log n_0=m\log^3 n_0$, using $t$ messages of size at most
$m$ starting with Alice.
Then, there is a
protocol for a subproblem of size $\gamma/k$, with error $\eps +
O(\frac{1}{\log n_0})$ on $\calD^{\gamma/k}_{m,n_0}$, using $t-1$ messages of
size at most $m$ starting with Bob.
\end{lemma}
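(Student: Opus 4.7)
The plan is to recognize that $\calD^n_{m,n_0}$ is essentially the distribution $\calD^{[k]}$ of Lemma~\ref{lem:rel} applied to $f=\,$\textsc{MedianBit}$\,$ on $\calD^{\gamma/k}_{m,n_0}$: the $k$ core subproblems are drawn i.i.d.\ from $\calD^{\gamma/k}_{m,n_0}$, a uniformly random $j\in[k]$ selects the ``live'' one, and the shell elements (chosen so that exactly $\frac{n}{2}-\frac{\gamma}{k}(j-\tfrac12)$ of them lie below $n$) force the median of the size-$n$ instance to equal the median of subproblem $j$. Thus the hypothesized $t$-message protocol starting with Alice for the size-$n$ problem should convert into a $t$-message protocol for $\textsc{Med}^{[k]}$, after which Lemma~\ref{lem:rel} eliminates Alice's first message.

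First, I would set up the reduction. Given inputs $(x,y)$ partitioned from an instance $z\sim\calD^{\gamma/k}_{m,n_0}$, together with Bob's index $j\in[k]$, Alice and Bob use public randomness to (a) sample $k-1$ additional i.i.d.\ instances from $\calD^{\gamma/k}_{m,n_0}$, partition them, and place the $k$ subproblems back-to-back in the core with the real one at position $j$; (b) sample the shell with the correct count of low values (determined by $j$) and a uniform partition of the shell pairs between the players; (c) do the same recursively for any fake-shell substructures arising at higher levels of the pairing. The resulting joint distribution of (instance, partition, true answer) is exactly that defining $\calD^n_{m,n_0}$, and the low-order median bit is precisely $\textsc{Med}(x,y)$. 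Simulating the assumed protocol yields a $[A,\eps;m,m,\dots,m]$-protocol ($t$ messages) for $\textsc{Med}^{[k]}$ over $(\calD^{\gamma/k}_{m,n_0})^{[k]}$.

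Second, I would verify that the freshly sampled partitions satisfy the ``at least $1/3$ of each shell to each player'' condition required for the hypothesized error bound to hold on the simulated instance. Only the fake shells and their recursive descendants need to be checked, since the partition of the given subproblem at position $j$ already satisfies the condition by the inductive hypothesis. By the Hoeffding argument given in the partition paragraph, each fake shell fails with probability $n_0^{-\Omega(\log n_0)}$, and there are $o(\sqrt{n_0})$ of them, so the total added error is $n_0^{-\Omega(\log n_0)}=o(1/\log n_0)$, which is absorbed into the final error bound.

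Finally, I would apply Lemma~\ref{lem:rel} with $k=m\log^2 n_0$ and $\delta$ defined by $m=\delta^2 k/(8\ln 2)$, giving $\delta=\sqrt{8\ln 2}/\log n_0 = O(1/\log n_0)$. The lemma then produces a protocol for $\textsc{Med}$ over $\calD^{\gamma/k}_{m,n_0}$ with parameters $[B,\,\eps+O(1/\log n_0);\,m,m,\dots,m]$ using $t-1$ messages starting with Bob, which is exactly the conclusion. The main technical point to watch is the faithfulness of the simulation: one must check that the joint distribution of the simulated size-$n$ instance together with its partition agrees with the distribution against which $\eps$ is measured, so that the $k$ independent $\calD^{\gamma/k}_{m,n_0}$-copies the reduction exposes to Lemma~\ref{lem:rel} really are independent and identically distributed, which is why the shell construction and fake-subproblem embedding must exactly mirror the recursive definition of $\calD^n_{m,n_0}$.
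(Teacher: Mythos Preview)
Your reduction has a real gap at the shell. In $\calD^n_{m,n_0}$ the shell pairs are divided between the players, so Alice holds a shell portion $x^s$ and Bob holds $y^s$; the active index $j$ is determined by the \emph{total} number of low shell values, hence by $(x^s,y^s)$ jointly. In particular $x^s$ is correlated with $j$. But in the $f^{[k]}$ setup of Lemma~\ref{lem:rel}, Alice must receive no information about $j$. When you propose that the players ``sample the shell with the correct count of low values (determined by $j$)'', Alice cannot do her part: she does not know $j$, yet she needs $x^s$ in hand before sending her first message. If instead she samples $x^s$ from its marginal using public coins, the pair $(x^s,j)$ is no longer distributed as in $\calD^n_{m,n_0}$, so your claim that ``the resulting joint distribution \ldots\ is exactly that defining $\calD^n_{m,n_0}$'' fails. (There is a related confusion in step~(a): in $f^{[k]}$ Alice already holds all $k$ of her subinstances $x_1,\dots,x_k$, so there is nothing to ``place at position $j$''; only the shell must be manufactured.)

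The paper supplies exactly the two steps you are missing. First, it \emph{fixes} $x^s$ to a typical value---one whose low-count $a$ satisfies $|a-|x^s|/2|\le\sqrt{n}\log n_0$---so that Alice's remaining input is just $(x_1,\dots,x_k)$ and $j$ becomes a function of $y^s$ alone, which Bob holds; this now matches the $f^{[k]}$ format. Second, fixing $x^s$ skews the conditional law of $j$ away from uniform; the paper shows via a ratio estimate on the binomial coefficients $\binom{|y^s|}{n/2-a-\frac{\gamma}{k}(j_0-\frac12)}$ that the skew is only $O(1/\log n_0)$ in total variation, and this is precisely where the parameter choice $\gamma=\sqrt{n}/\log^2 n_0$ is used. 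Only after these two corrections does Lemma~\ref{lem:rel} apply with the $\delta=O(1/\log n_0)$ you compute. Your final invocation of Lemma~\ref{lem:rel} is fine; it is the ``faithfulness of the simulation'' you yourself flag at the end that actually breaks and requires this additional argument.
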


We use Lemma~\ref{lem:median-elim} to prove Theorem~\ref{thm:cc} by inductively
eliminating all messages.
Let $n_0=n$.  
At each application we remove one message to get an error increase
of $O(\frac{1}{\log n_0})$.
If the number of rounds is less than the number of levels of recursion, i.e., $m^{2^{t+1}-2}\le n/ \log^{9\cdot 2^t-2}n$, then the {\sc MedianBit} value
of the subproblem will still be a uniformly random bit on the remaining input,
but the protocol will have no communication and the error will have increased
to at most $\epsilon+O(\frac{t}{\log n})<1/2$
since $t$ is $O(\log\log_m n)$, which is a contradiction.

To prove Lemma~\ref{lem:median-elim} we want to apply Lemma~\ref{lem:rel} 
using the $k$ subproblems in the core, but the assumption of Lemma~\ref{lem:rel}
requires that (1) Alice does not know anything about which subproblem $j\in [k]$
is chosen by Bob, and (2) that subproblem $j$ is chosen uniformly at random.  
The choice of subproblem $j$ is determined by the shell elements at this level.

Denote Alice's shell elements by $x^s$, and Bob's shell elements by
$y^s$. Let Alice's part of the core subproblems be $x_1, \dots,
x_k$, and Bob's part be $y_1, \dots, y_k$. Note that the choice of the relevant
subproblem $j$ is some function of $(x^s, y^s)$, and the median of
the whole array is the median of $x_j \cup y_j$.

The proof of Lemma~\ref{lem:median-elim} proceeds in two stages:

\paragraph{Fixing $x^s$.}
We first fix the value of $x^s$ so that the choice of subproblem does not depend
on Alice's input and, moreover, so that the probabilities for different
values of $j$ over Bob's input $y^s$ will not be very different from each
other because they are still near the middle binomial coefficients.   

By the niceness of the partition of the pairs, we know that the number of
Alice's shell pairs is
$|x^s| \in \big[ \frac{1}{3} (n - \gamma), \frac{2}{3} (n - \gamma) \big]$.
Let $a$ be the number of elements in $x^s$ that are below $n$. 
We want to fix $x^s$ such that the error
does not increase too much, and
$|a -\frac{|x^s|}{2}| \le \sqrt{n} \cdot \log n_0$:


No matter which value of $j\in [k]$ is chosen in the input distribution,
the shell elements chosen to be below $n$ consist of a random subset of
$x^s\cup y^s$ of a fixed size that is between $n/2-\gamma$ and $n/2+\gamma$; i.e.,
of fractional size $p_j$ between $\frac12-\frac{\gamma}{n}$ and $\frac12+\frac\gamma{n}$.
By Hoeffding's inequality,
the probability that the actual number $a$ of these elements that
land in $x^s$ deviates from $|x^s|/2$ by more than $(t+\frac{\gamma}{n})|x^s|$
is at most $2e^{-2t^2|x^s|}$.  
Since $(n-\gamma)/3\le |x^s| \le 2(n-\gamma)/3$, the probability that 
this deviates from $|x^s|/2$ by more than $\sqrt{n}\log n_0$ is
at most $n_0^{-O(\log n_0)}$.
We discard all values of $x^s$ that lead to $a$ outside this range. 
Now fix $x^s$ to be the value that minimizes the conditional error.

\paragraph{Making $j$ uniform.}
Once $x^s$ is fixed, $j$ is a function only of $y^s$. Thus, we are close to the
setup of Lemma~\ref{lem:rel}: Alice
receives $x_1, \dots, x_k$, Bob receives $y_1, \dots, y_k$ and $j\in
[k]$, and they want to compute a function $f(x_j, y_j)$.
The only problem is that
the lemma requires a uniform distribution of $j$, whereas our
distribution is no longer uniform (having fixed $x^s$).
However, we will argue that it is not far from uniform.

%
%

For each fixed $j_0\in [k]$, 
if $a$ shell elements from Alice's part are below $n$, then
Bob must have $\frac{n}{2} - a - \frac{\gamma}{k} ( j_0 - \frac{1}{2})$
shell elements below $n$.
Therefore, $\Pr[j = j_0]$ is proportional to
$\displaystyle \binom{ |y^s| }{ \frac{n}{2} 
      - a - \frac{\gamma}{k} ( j_0 - \frac{1}{2}) }.$
More precisely $\Pr[j = j_0]$ is this binomial coefficient
divided by the sum of the coefficients for all $j_0$. Thus, to
understand how close $j$ is to uniform, we must understand the
the dependence of these binomial coefficients on $j_0$.

Let $\Delta=a-|x^s|/2$.  This satisfies $|\Delta|\le \sqrt{n}\log n_0$.
Since $|y^s| = n - |x^s| \ge \frac{n-\gamma}{3}> n/4$
we have 
$\binom{ |y^s| }{ \frac{n}{2} - a - \frac{\gamma}{k} ( j_0 - \frac{1}{2}) } 
=\binom{|y^s|}{|y^s|/2 -\Delta - \delta_{j_0}}$
where $0<\delta_{j_0}<\gamma$.
Assume wlog that $\Delta\ge 0$.
The ratio between different binomial coefficients is at most
the ratio 
\begin{align*}
\binom{n/4}{n/8-\Delta}/\binom{n/4}{n/8-\Delta-\gamma}
&=\frac{(n/8+\Delta+\gamma)\cdots (n/8+\Delta+1)}{(n/8-\Delta)\cdots (n/8-\Delta-\gamma+1)}\\
&\le \left(1+ \frac{10(2\Delta+\gamma)}{n}\right)^\gamma
\end{align*}
which is $1+O(\frac{\Delta\gamma}{n})=1+O(\frac{1}{\log n_0})$ given the
values of $\Delta$ and $\gamma$.

Therefore we have shown that the statistical distance between the induced
distribution on $j$ and the uniform distribution is $O(\frac{1}{\log n_0})$.
We can thus consider the following alternative distribution for the problem:
pick $j$ uniformly at random, and manufacture $y^s$ conditioned on
this $j$. The error on the new distribution increases by at most
$O(\frac{1}{\log n_0})$.

Now we can apply the round elimination lemma, Lemma~\ref{lem:rel}. 
As $k \ge m \log^2 n_0$, the lemma will increase the error by
$O(\frac{1}{\log n_0})$.

\section{Oblivious Branching Programs and the Median}

The following result is essentially due to Okol'nishnikova~\cite{okol93}, who
used it with slightly different parameters for read-$k$ branching programs, and
was independently derived by Ajtai~\cite{ajtai:nondetBP-journal} in the context
of general branching programs.

\begin{proposition}
\label{prop:assign-layers}
Let $s$ be a sequence of of $kn$ elements from $[n]$.  If $s$ is divided into
$r=4k^2$ segments $s_1,\ldots, s_{r}$, each of length $n/(4k)$, then
there is an assignment of $2k$ segments $s_j$ to a set $L_A$ and all remaining
segments $s_j$ to $L_B$ so that the number
$n_A$ ($n_B$) of elements of $[n]$
whose only appearances are in segments in $L_A$ (respectively, $L_B$) satisfy
$n_A\ge n/(2\binom{4k^2}{2k})$ and $n_B\ge n/2$.
\end{proposition}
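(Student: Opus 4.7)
The plan is to produce $L_A$ by a probabilistic argument for the $n_A$ lower bound, and then observe that the $n_B$ lower bound holds deterministically for every size-$2k$ choice of $L_A$. Since the latter is universal, any $L_A$ produced by averaging automatically satisfies both bounds.

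For $n_A$, I would let $c_i$ denote the number of distinct segments $s_j$ that contain element $i$. Each of the $kn$ positions of $s$ lies in exactly one segment, so $\sum_{i=1}^n c_i \le kn$, and Markov's inequality yields at least $n/2$ indices with $c_i \le 2k$. Now draw $L_A$ uniformly from the $\binom{r}{2k}$ subsets of $[r]$ of size $2k$. For each element $i$ with $c_i \le 2k$, the probability that every segment containing $i$ lies inside $L_A$ equals $\binom{r-c_i}{2k-c_i}/\binom{r}{2k} \ge 1/\binom{4k^2}{2k}$, the numerator being at least $1$. By linearity of expectation, summing the resulting indicators over the $\ge n/2$ good indices gives $\mathbb{E}[n_A] \ge n/(2\binom{4k^2}{2k})$, so some choice of $L_A$ attains this bound.

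For $n_B$, observe that any size-$2k$ set $L_A$ covers at most $2k\cdot n/(4k) = n/2$ positions of $s$, so at most $n/2$ distinct elements of $[n]$ appear anywhere in $L_A$; the remaining at least $n/2$ elements have all their appearances confined to $L_B$, giving $n_B \ge n/2$ unconditionally. Applying this to the $L_A$ produced in the previous paragraph gives both bounds simultaneously. The only real content is the probabilistic step, and the only mild subtlety is resisting the temptation to estimate $\binom{r-c_i}{2k-c_i}/\binom{r}{2k}$ more sharply --- the trivial bound $1/\binom{r}{2k}$ already matches the denominator in the statement exactly, so tighter hypergeometric estimates would only waste effort.
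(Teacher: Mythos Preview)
Your proof is correct and follows essentially the same approach as the paper: both use Markov/averaging to find at least $n/2$ elements appearing in at most $2k$ segments, then choose $L_A$ uniformly at random among size-$2k$ subsets and apply linearity of expectation with the trivial bound $1/\binom{r}{2k}$ on the hypergeometric probability, while noting that $n_B\ge n/2$ holds for every such $L_A$ by the total-length argument. Your presentation is slightly more explicit about the hypergeometric probability and the universality of the $n_B$ bound, but there is no substantive difference.
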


\begin{proof}
There is a subset $V$ of at least $n/2$ elements of $[n]$ that occur at most
$2k$ times in $s$ and hence appear in at most $2k$ segments of $s$.
Choose the $2k$ sets $s_j$ to include in $L_A$ uniformly at random.
For a given $i\in V$, $i$ will contribute to $n_A$ if and only if all of the 
the at most $2k$ segments that contain its
occurrences are chosen for $L_A$.  This occurs with probability at least
$1/\binom{r}{2k}$; hence the expected number of elements in $V$ that only occur
in segments of $L_A$ is at least $|V|/\binom{r}{2k}$. 
Therefore we can select a fixed assignment that contains has at least this
number.
Since the total length of segments in $L_A$ is at most $2k n/(4k)\le n/2$, at
least $n/2$ elements of $[n]$ only occur in segments in $L_B$.
\end{proof}

\begin{lemma}
\label{lem:oblivious-reduce}
Suppose that there is a $2n$-way oblivious branching program of size $2^S$
running in time
$T=kn$ that computes {\sc MedianBit} for $n$ distinct inputs
from $[2n]$.   Then there is deterministic 2-party communication protocol 
using at most $4k$ messages of $S$ bits each plus a final 1-bit message to
compute {\sc MedianBit}
for $N=\ceiling{n/\binom{4k^2}{2k}}$ distinct inputs from $[2N]$ that
are divided evenly between the two players.
\end{lemma}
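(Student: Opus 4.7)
The plan is to make the two players simulate the oblivious branching program together, where the simulation is parcelled out between them by Proposition~\ref{prop:assign-layers} applied to the (fixed) query sequence, and where the smaller median instance is embedded into a ``window'' of the larger one by suitable padding.

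First, I would use obliviousness: the sequence of $T=kn$ input indices queried on any source-to-sink path is a single fixed string $s\in[n]^{kn}$. Cut $s$ into $r=4k^2$ contiguous segments of length $n/(4k)$ and apply Proposition~\ref{prop:assign-layers} to obtain a set $L_A$ of $2k$ segments (which Alice will simulate) and its complement $L_B$ (Bob's), along with disjoint index sets $I_A,I_B\subseteq[n]$ with $|I_A|\ge n/(2\binom{4k^2}{2k})$ and $|I_B|\ge n/2$, such that indices in $I_A$ (resp.\ $I_B$) are only queried inside $L_A$-segments (resp.\ $L_B$-segments).

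Second, I would reduce the $N$-input MedianBit instance on $[2N]$ (with $N/2$ inputs per player) to an $n$-input instance on $[2n]$. Choose $I_A'\subseteq I_A$ and $I_B'\subseteq I_B$ of size $N/2$ each; the size bound on $|I_A|$ forces exactly the choice $N=\lceil n/\binom{4k^2}{2k}\rceil$. Shift each input value $z\in[2N]$ to $z+(n-N)/2$, place Alice's shifted values at the indices of $I_A'$ and Bob's at those of $I_B'$, and fill the remaining $n-N$ positions with publicly agreed padding: put $(n-N)/2$ values taken from $\{1,\dots,(n-N)/2\}$ (strictly below the embedded window) at some designated subset of the unused positions, and $(n-N)/2$ values from the range $\{(n+3N)/2+1,\dots,n+N\}$ (strictly above the window) at the complementary subset. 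The assignment is arranged so that every index in $I_A$ holds a value Alice knows (her own inputs on $I_A'$, public padding elsewhere in $I_A$), symmetrically for Bob, and every index in $[n]\setminus(I_A\cup I_B)$ holds a public padding value that both know. All $n$ values used are distinct elements of $[2n]$, and because the embedded window is sandwiched by equally many values on each side, the $n/2$-th smallest of the full instance is precisely the $N/2$-th smallest of the embedded instance shifted by $(n-N)/2$; hence the low-order bit of the big median equals the low-order bit of the small median XORed with the public bit $(n-N)/2\bmod 2$.

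Finally, I would have the two players run the branching program cooperatively in the fixed query order: whenever the next segment is owned by a different player than the current one, the current player sends the current branching-program node (an $S$-bit name) to the next player, who resumes the simulation from that node. Since $L_A$ has only $2k$ segments, the two-colored segment sequence has at most $4k+1$ maximal monochromatic runs, so at most $4k$ transfer messages of $S$ bits are exchanged; when the last segment terminates at a sink, the player holding control XORs the sink's label with the known parity bit and sends the resulting 1-bit answer. I expect the main obstacle to be verifying the embedding side of the argument: checking that the sandwich padding is balanced so that the two medians line up, that all $n$ values are distinct in $[2n]$ (which needs $N\le n$, easily true), and that the rounding in $N=\lceil n/\binom{4k^2}{2k}\rceil$ does not break the bound $|I_A|\ge N/2$; the round-counting half is routine once $|L_A|=2k$ is in hand.
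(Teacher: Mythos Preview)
Your proposal is correct and follows essentially the same approach as the paper's proof: apply Proposition~\ref{prop:assign-layers} to the fixed query sequence, embed the $N$-input instance into the central window by padding with $(n-N)/2$ values on each side, and simulate the branching program with the players exchanging node names at segment ownership changes. The only cosmetic difference is that the paper shifts the $[2N]$ values by $n-N$ (placing the window at $[n-N+1,n+N]$) whereas you shift by $(n-N)/2$; both embeddings work and yield the same XOR correction for the output bit.
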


\begin{proof}
Let $s$ be the length $T$ sequence of indices of inputs queried by the oblivious
branching program.
Let $k=T/n$, $r=4k^2$, and $N=\ceiling{n/\binom{r}{2k}}$.
Fix the assignment of segments to $L_A$ and $L_B$ given
by Proposition~\ref{prop:assign-layers}. 
Arbitrarily select a subset $I_A$ of $N/2$ of the $n_A$ indices that only appear
in $L_A$ and
give those inputs to player $A$.  Similarly, select a subset $I_B$ of $N/2$ of
the $n_B$ indices
that only appear in $L_B$ and give those inputs to player $B$.  
Let $Q$ be the remaining set of $n-N$ input indices.   

Fix any input assignment to the indices in $Q$ that assigns $(n-N)/2$
distinct values from $[n-N]$ to half the elements of $Q$ and
the same number of distinct values from $[n+N+1,2n]$ to the other half
of the elements of $Q$.   After fixing this partial assignment we restrict
the remaining inputs to have values in the segment $[n-N+1,n+N]$ of length
$2N$.  

The communication protocol is derived as follows: Alice (resp. Bob)
interprets her $N/2$ inputs from $[2N]$ as assignments from $[2n]$ to the
elements of $I_A$ (resp. $I_B$) by adding $n-N$ to each value.  Alice will
simulate the branching program executing the segments in $L_A$ and Bob
will simulate the branching program executing the segments in $L_B$.
A player will continue the simulation until the next segment is held by
the other player, at which point that
player communicates the name of the node in the branching program reached at
the end of its layer.  Since $L_A$ has only $2k$ segments, there are at most
$4k$ alternations between players as well as the final output bit which
gives the total communication.  By construction, the median of the whole
problem is the median of the $N$ elements and the final answer for
{\sc MedianBit} on $[2N]$ is computed by
XOR-ing the result with the low order bit of $n-N$.
\end{proof}

\begin{theorem}
Any oblivious branching program computing {\sc MedianBit}
for $n$ inputs from $[2n]$ in time $T\le kn$ requires size at least
$2^{\tilde \Omega(n^{1/2^{4k+2}})}$;
in particular, if it uses space $S$, any oblivious branching program requires
time $T\ge 0.25 n\log\log_S n - c\;n$ for some constant $c$.
\end{theorem}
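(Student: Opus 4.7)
The plan is to chain the oblivious-to-communication reduction of Lemma~\ref{lem:oblivious-reduce} with the bounded-round lower bound of Theorem~\ref{thm:cc}, and then solve the resulting inequality for $S$ and for $T$. Write $k = T/n$. We may assume $k \le \log\log n$ and $S \ge \log n$, since otherwise $T \ge n\log\log n$ already dominates the desired time bound and the size claim is polynomial and hence trivially vacuous.

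First I apply Lemma~\ref{lem:oblivious-reduce} to the hypothesized oblivious branching program of size $2^S$ and time $kn$. This yields a deterministic two-party protocol for {\sc MedianBit} on $N := \lceil n/\binom{4k^2}{2k}\rceil$ distinct inputs from $[2N]$, consisting of $4k$ messages of $S$ bits plus a one-bit tail. Taking $m := S$ and $t := 4k+1$, every message is of length at most $m$, and since the branching program decides every input correctly, the protocol has zero error on every input partition. In particular it has zero error on the hard distribution $\calD^N_{S,N}$: because the branching program handles any value assignment to the $N$ chosen index slots, we are free to route the realized element of each pair of $\calD^N_{S,N}$ into whichever player's slot we wish, and hence to realize any pair partition we like, including a \emph{nice} one as guaranteed to exist by the paragraph preceding Theorem~\ref{thm:cc}.

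Now I invoke Theorem~\ref{thm:cc} with parameters $n := N$, $m := S$, $t := 4k+1$, and $\varepsilon := 0$. It yields $S^{2^{4k+2}-2} > N/\log^{9\cdot 2^{4k+1}-2}N$, or equivalently $4k+1 \ge \log\log_S N - O(1)$. Under $k \le \log\log n$, the bound $\binom{4k^2}{2k} \le (4k^2)^{2k}$ is $n^{o(1)}$, so $N = n^{1-o(1)}$, $\log N = \Theta(\log n)$, and $\log\log_S N = \log\log_S n + o(1)$. Taking $(2^{4k+2}-2)$-th roots of the former form gives $S \ge \tilde\Omega(n^{1/2^{4k+2}})$, i.e.\ branching-program size $2^S \ge 2^{\tilde\Omega(n^{1/2^{4k+2}})}$. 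Rearranging the latter form gives $k \ge \tfrac14 \log\log_S n - O(1)$, whence $T = kn \ge 0.25\, n\log\log_S n - cn$ for a suitable constant $c$.

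The only conceptually non-routine step is identifying the lemma's rigid index partition with a nice pair-partition of $\calD^N_{S,N}$; the point is that a deterministic branching program must compute correctly under \emph{all} value assignments to the chosen slots, which lets us embed the hard distribution under any desired pair-partition. Everything else is arithmetic in $k$ and $N$: the binomial blowup $\binom{4k^2}{2k}$ is absorbed into the $\tilde\Omega$ because we work in the regime $k = O(\log\log n)$, and the polylog factor $\log^{9\cdot 2^{4k+1}-2}N$ on the right-hand side is likewise swept up.
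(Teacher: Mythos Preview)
Your proof is correct and follows essentially the same route as the paper: apply Lemma~\ref{lem:oblivious-reduce} to obtain a $(4k{+}1)$-message protocol with messages of at most $S$ bits on $N=\lceil n/\binom{4k^2}{2k}\rceil$ inputs, invoke Theorem~\ref{thm:cc}, and do the arithmetic. The paper carries out the same chain, obtaining $S\ge n^{1/(2^{4k+2}-2)}/\log^5 n$ directly (absorbing the $\binom{4k^2}{2k}$ factor into a constant after taking the $(2^{4k+2}{-}2)$-th root, without needing your regime assumption $k\le\log\log n$), and then takes logarithms to extract the time bound.

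One minor remark: your paragraph on matching the index partition with a ``nice'' pair partition is more work than necessary. Since the protocol from Lemma~\ref{lem:oblivious-reduce} is deterministic and correct on \emph{every} input, it has error zero under $\calD^N_{S,N}$ with the \emph{random} partition exactly as stated in Theorem~\ref{thm:cc}; you can therefore invoke the theorem verbatim without descending into its proof to fix a nice partition. Your argument is not wrong, just slightly roundabout.
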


\begin{proof}
Since $T/n\le k$, applying
Lemma~\ref{lem:oblivious-reduce} we derive a 2-party communication
protocol sending $t=4k+1$ messages of at most $S\ge \log n$ bits each
to compute {\sc MedianBit} on
$N\ge n/\binom{4k^2}{2k} \ge n/(2ek)^{2k}$ inputs from $[2N]$.
By Theorem~\ref{thm:cc}, $S> N^{1/(2^{t+1}-2)}/\log^{(9\cdot 2^t-2)/(2^{t+1}-2)}N
> N^{1/(2^{4k+2}-2)}/\log^{71/15} N$ since $t\ge 4$
and hence 
$S\ge n^{1/(2^{4k+2}-2)}/\log^5 n$.
The size of the branching program is $2^S$ where $S$ is its space.
Moreover, taking logarithms base $S$ and then base 2 we have
$4k\ge \log\log_S n - c'$ for some constant $c'$.
\end{proof}

\subsection*{Analog of $\cl{P} \ne \cl{NP}\cap\cl{coNP}$ for time-bounded oblivious BPs }

\begin{corollary}
\label{main-corollary}
Any oblivious branching program of length $T\le kn$ computing the low
order bit of the median requires size at least
$2^{\tilde\Omega(n^{1/2^{4k+2}})}$;
in particular, this size is super-polynomial when $T$ is $o(n\log\log n)$.
\end{corollary}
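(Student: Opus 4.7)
The plan is to deduce this corollary directly from the preceding theorem. The size bound $2^{\tilde\Omega(n^{1/2^{4k+2}})}$ for length $T \le kn$ is literally the first conclusion of that theorem, so the first sentence of the corollary requires no further work beyond pointing to it.

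For the ``in particular'' clause, I would argue by contrapositive, using the second form of the theorem's conclusion $T \ge 0.25\, n\log\log_S n - cn$. Suppose for contradiction that some oblivious branching program of length $T = o(n\log\log n)$ computes {\sc MedianBit} with only polynomial size $n^{O(1)}$. Then its space is $S = O(\log n)$, so $\log S = O(\log\log n)$ and hence
\[
\log_S n \;=\; \frac{\log n}{\log S} \;=\; \Omega\!\left(\frac{\log n}{\log\log n}\right),
\]
which yields $\log\log_S n = \log\log n - O(\log\log\log n) = \Omega(\log\log n)$. The theorem then gives $T \ge 0.25\, n\log\log_S n - cn = \Omega(n\log\log n)$, contradicting the assumption $T = o(n\log\log n)$. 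A direct alternative, avoiding the second form of the theorem, would be to check that when $k = T/n = o(\log\log n)$ the exponent $n^{1/2^{4k+2}}$ dominates $\log^c n$ for every constant $c$; both routes are equivalent.

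The main obstacle: essentially none at this stage. All of the genuine content---round elimination via Lemma~\ref{lem:rel}, the recursive hard distribution $\calD^n_{m,n_0}$ on median instances, Theorem~\ref{thm:cc} on bounded-round communication complexity, and the reduction from oblivious branching programs to bounded-round best-partition communication via Proposition~\ref{prop:assign-layers} and Lemma~\ref{lem:oblivious-reduce}---has already been carried out. This corollary simply repackages the time--space tradeoff as a length-versus-size statement and extracts the polynomial/super-polynomial distinction needed for the $\cl{P}$ versus $\cl{NP}\cap\cl{coNP}$ analog promised in the introduction.
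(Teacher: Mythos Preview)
Your proposal is correct and matches the paper's approach: the paper states Corollary~\ref{main-corollary} with no proof at all, treating it as an immediate restatement of the preceding theorem, and your argument for the ``in particular'' clause via the contrapositive of the $T \ge 0.25\,n\log\log_S n - cn$ bound is exactly the intended reading.
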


On the other hand, the median can be computed by a nondeterministic
oblivious read-once branching program using only $O(\log n)$ space.

\begin{lemma}
\label{upper}
There is a nondeterministic oblivious read-once branching program of size
$O(n^4)$ that computes the median on $n$ integers from $[2n]$.
\end{lemma}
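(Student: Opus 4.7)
The plan is to build a nondeterministic, oblivious, read-once branching program that guesses the median value and verifies the guess in a single left-to-right pass over the input. I fix the query order $x_1, x_2, \ldots, x_n$. At each level $i \in \{0, 1, \ldots, n\}$ the reachable state is a triple $(v, c_<, c_\le)$ with $v \in [2n]$ the current guess of the median and $c_<, c_\le \in \{0, 1, \ldots, n\}$ the number of inputs queried so far that are strictly less than $v$, respectively at most $v$. Transitions are deterministic once $v$ is fixed: a level-$(i-1)$ node with state $(v, c_<, c_\le)$ queries $x_i$ and moves to the level-$i$ successor with the counters updated according to whether $x_i < v$, $x_i = v$, or $x_i > v$. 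A level-$n$ state has an outgoing edge to the sink labelled $v$ exactly when $c_<$ and $c_\le$ certify that $v$ is the median of the input (using the paper's $n/2$-th-largest convention); otherwise the node is a dead end.

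The nondeterminism is folded into the source: the source queries $x_1$ and, for each value of $x_1$, has $2n$ outedges, one per candidate $v \in [2n]$, going to the appropriate level-$1$ state. I then verify three properties in order. Obliviousness and read-once both hold because level $i$ always queries $x_i$, independently of the input and of the nondeterministic choices. Correctness follows because every input has a unique true median $v^\star$, so the branch that guesses $v = v^\star$ is the only one surviving the end-of-pass check; this gives both the ``at least one accepting path'' and ``all accepting paths agree on the output'' conditions required by the nondeterministic convention in Section~2. The size bound is then a direct count: each of the $n+1$ levels carries at most $2n \cdot (n+1)^2 = O(n^3)$ states, plus $O(n)$ sinks, for $O(n^4)$ nodes total.

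The only mild obstacle is the bookkeeping point that, in the branching-program model used here, every internal node must query some input, so the nondeterministic guess of $v$ cannot live in a ``free'' guessing preamble. This is handled, as above, by fanning out the source node's $x_1$-query into one branch per candidate $v$; no further work is needed for read-once or obliviousness, and the end-of-pass median certificate is symmetric in the inputs and hence unaffected by the fixed query order.
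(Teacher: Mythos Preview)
Your proposal is correct and follows essentially the same approach as the paper: guess the median value $v$ nondeterministically at the source (via the fan-out on the $x_1$-query), then verify in a single oblivious pass by maintaining two $O(\log n)$-bit counters, yielding $O(n^3)$ states per level and $O(n^4)$ total. The only cosmetic difference is that the paper records $(\ell,e)=(\#\text{less},\#\text{equal})$ and prunes inconsistent edges along the way, whereas you record the equivalent pair $(c_<,c_\le)$ and defer the consistency check to the final level; neither choice affects correctness or the size bound.
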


\begin{proof}
The branching program guesses the value of the median in $[2n]$ 
and keeps track of the number of elements that it has seen
both less than the median and equal to the median in order to check that the
value is correct.
Other than the source and sink nodes there is one node of the branching
program for each $(i,m,\ell,e)$ for $m\in [2n]$, $i\in [n]$ such that
$0\le \ell+e\le \min(i,(n+1)/2+1)$.   The source node which queries $x_1$ is 
the only node to have multiple outedges with the same value label.  It has
$(2n)^2$ outedges, $2n$ for each value, one corresponding to each of the
median value guesses.  If at a node $(i,m,\ell,e)$, the values $i$, $\ell$, $e$
together with the value $j$ of $x_{i+1}$
are inconsistent with $m$ being the median then the outedge for $j$ is not
present.
\end{proof}

In particular, in contrast to Corollary~\ref{main-corollary}, Lemma~\ref{upper}
implies that {\sc MedianBit} can be computed in polynomial size
by length $n$ nondeterministic and co-nondeterministic oblivious branching
programs, hence we have shown the analog of $\cl{P}\ne \cl{NP}\cap\cl{coNP}$
for oblivious branching programs of length $o(n\log\log n)$.

\section{Beyond Oblivious Branching Programs}


We first observe that our lower bounds for the median problem extend to
the case of read-once branching programs by using the fact that such
programs for the median can also be assumed to be oblivious without loss
of generality.   (Oblivious read-once branching programs are also
known as \emph{ordered binary decision diagrams (OBDDs)}.)  

\begin{lemma}
If $f:D^n\rightarrow R$ is a symmetric function of its inputs then 
for every 
read-once branching $B$ computing $f$ 
there is an 
oblivious read-once branching program,
of precisely the same size as $B$,
that computes $f$.
\end{lemma}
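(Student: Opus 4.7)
The strategy is to construct $B'$ on the same underlying directed graph as $B$, modifying only the input index queried at each internal node. Since $f$ is symmetric, permuting the input coordinates leaves $f$ invariant, so changing which coordinate is queried at a node can be compensated for by a global permutation of the input.

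A preliminary observation is that any read-once oblivious BP is automatically leveled: on every source-to-sink path the $k$-th query is a fixed index $\iota_k$, and the $\iota_k$'s are distinct by read-once, forcing each node's distance from the source to be path-independent. Hence for the graph-preserving construction to succeed, $B$ itself must be leveled. I would first argue that $B$ can be reduced to leveled form without adding new nodes.

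Assuming $B$ is leveled of depth $T$, pick any injection $\phi\colon\{1,\ldots,T\}\to[n]$ (for concreteness $\phi(d)=d$) and define $B'$ by keeping the vertex set, edges, and edge labels of $B$, but replacing the query at each depth-$d$ internal node by $\phi(d)$. Then $B'$ is oblivious (queries depend only on depth) and read-once ($\phi$ is injective) by construction. To see that $B'$ computes $f$, fix an input $x$ and let $v_0,v_1,\ldots,v_L$ be the path traced by $B'$ on $x$. Writing $i(v)$ for the original query label of $v$ in $B$, define $x'$ by $x'_{i(v_k)}=x_{\phi(k+1)}$ for $k=0,\ldots,L-1$ and arbitrarily on the remaining coordinates. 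A straightforward induction on $k$ shows that $B$ on input $x'$ traces exactly the same path $v_0,\ldots,v_L$, so $B(x')=B'(x)=f(x')$. Since both $B$ and $B'$ are read-once, the map $i(v_k)\mapsto\phi(k+1)$ is a bijection between two equinumerous subsets of $[n]$, which extends to a permutation $\sigma$ of $[n]$ with $x'=x\circ\sigma$ (choosing the free coordinates of $x'$ accordingly). By symmetry of $f$, $f(x')=f(x)$, and hence $B'(x)=f(x)$.

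The main obstacle is the leveling reduction without increasing node count; once that is in hand, the relabeling argument above is routine. The key structural claim I would need is that if two source-to-$v$ paths in $B$ have different lengths, the sub-BP at $v$ must avoid every index queried on either path (by read-once), and the symmetry of $f$ together with this extra constraint forces redundancy that can be removed by identifying nodes or rerouting edges, equalizing all source-to-$v$ path lengths without introducing new vertices. This structural fact is where I expect the real work to lie.
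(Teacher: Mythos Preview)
Your identified ``main obstacle''---leveling $B$ without increasing its node count---is not just unproved in your sketch; it is false in general, so this route cannot work. Take $f=\mathrm{AND}$ on two Boolean inputs. The natural read-once program (query $x_1$; on $0$ go directly to the $0$-sink, on $1$ query $x_2$) has four nodes. As you correctly observe, any strictly oblivious read-once program is leveled; but a leveled read-once program for $\mathrm{AND}$ with only two internal nodes must send both source edges to the single level-$1$ node and hence forgets $x_1$, so at least three internal nodes (five nodes total) are required. ``Precisely the same size'' therefore cannot hold under the strict definition of oblivious you are working with, and the structural claim you hoped would equalize path lengths simply fails here.

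The paper avoids this with a different relabeling and a correspondingly relaxed reading of ``oblivious.'' For each node $v$ it sets $I_v\subseteq[n]$ to be the set of \emph{all} input indices read on \emph{any} source-to-$v$ path (the union over paths, not the set along a single path), and relabels $v$ to query $x_{|I_v|+1}$. Read-once guarantees the original index at $u$ lies outside $I_u$, so every edge $u\to v$ gives $|I_v|\ge|I_u|+1$; thus the new labels strictly increase along every path, and the resulting program is read-once and on every source-to-sink path reads a subsequence of $x_1,x_2,\ldots,x_n$ in order. The paper explicitly flags this weaker form with ``(not necessarily leveled)'' and ``(possibly skipping over some inputs on the path).'' Your per-path correctness argument---manufacture a permuted input $x'$ that makes the original $B$ trace the same path, then invoke symmetry of $f$---is exactly what the paper does and carries over verbatim to this construction. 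The missing idea in your proposal is precisely the $|I_v|$-based relabeling, which makes the leveling question moot.
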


\begin{proof}
With each node $v$ in a read-once branching program, we can associate a
set $I_v\subseteq [n]$ of input indices that are read along paths from the
source node to $v$.   We make $B$ into an oblivious branching program
by replacing the index at node $v$ by $|I_v|+1$.   This yields an 
oblivious read-once branching program (not necessarily leveled) that reads its
inputs in the order $x_1,x_2,\ldots,x_n$ along every path (possibly skipping
over some inputs on the path).  
Since $f$ is a symmetric function, a path of length $t\le n$ in $B$
queries $t$ different input locations and the value of the function on the
partial inputs is the same because the function is symmetric and the values in
those $t$ input locations are the same.   
\end{proof}

We immediately obtain the following corollary.

\begin{corollary}
For any $\varepsilon<1/2$, any read-once branching program computing
{\sc MedianBit} for $n$ integers from $[2n]$ requires
size $2^{n^{\Omega(1)}}$.
\end{corollary}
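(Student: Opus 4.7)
The plan is to combine the preceding symmetrization lemma with Corollary~\ref{main-corollary}. Given any read-once branching program $B$ computing {\sc MedianBit}, I would first observe that {\sc MedianBit} is a symmetric function of its $n$ inputs: the median of a multiset depends only on the multiset, not on the order in which the values appear in the input vector, and hence so does its least significant bit. This lets us apply the preceding lemma to obtain an oblivious read-once branching program $B'$ of exactly the same size as $B$ that computes the same function {\sc MedianBit}.

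Next, since $B'$ is read-once, every source-to-sink path queries each input index at most once, so its length (time) satisfies $T \le n$. Thus the hypothesis of Corollary~\ref{main-corollary} holds with $k = 1$. Instantiating that corollary with $k=1$ yields a size lower bound of $2^{\tilde\Omega(n^{1/2^{4\cdot 1 + 2}})} = 2^{\tilde\Omega(n^{1/64})}$, which is in particular $2^{n^{\Omega(1)}}$. Since $B'$ has the same size as $B$, the same bound holds for $B$, giving the desired conclusion.

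The $\varepsilon$ parameter in the statement is handled uniformly: the underlying communication lower bound in Theorem~\ref{thm:cc} is already a distributional (bounded-error) bound, and the reduction in Lemma~\ref{lem:oblivious-reduce} preserves error, so the oblivious lower bound flows through without change. There is essentially no obstacle here: the content of the corollary is a direct packaging of the symmetrization lemma on top of the oblivious lower bound. The only sanity-checks are (i) confirming that read-once syntactic structure on the length-$n$ index alphabet forces $T \le n$ (immediate from the definition), and (ii) noting that the symmetrization lemma preserves not only size but also the property of being read-once (which is clear from its proof, since relabeling indices along paths does not duplicate indices on any path).
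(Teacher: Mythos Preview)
Your proposal is correct and matches the paper's approach exactly: the paper states this as an immediate corollary of the symmetrization lemma combined with the oblivious lower bound, and you have correctly spelled out the two steps (symmetrize, then invoke Corollary~\ref{main-corollary} with $k=1$). The only minor remark is that your sanity-check (ii) about the symmetrization lemma preserving read-once-ness is unnecessary, since the lemma's conclusion already guarantees an oblivious \emph{read-once} program.
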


In particular this means that {\sc MedianBit}
is another example, after those in~\cite{jrsw:pvnpconp}, of a problem showing 
the analogue of  $\cl{P}\ne \cl{NP}\cap \cl{coNP}$ for read-once branching
programs.
However, proving the analogous property even for read-twice branching programs
remains open and will require a fundamentally new technique for deriving
branching program lower bounds.

The approach in all lower bounds for general branching programs 
(or even for read-$k$ branching programs) computing decision
problems~\cite{brs93,okol93,bjs:tradeoff,ajtai:nondetBP-journal,ajtai:nonlinear,bssv:randomts-journal,bv:multibp} applies equally well to nondeterministic
computation.
(For example, the fact that the technique also works for nondeterministic
computation is made explicit in~\cite{brs93}.)
Though this technique has been used to separate nondeterministic from
deterministic computation~\cite{ajtai:nondetBP-journal} computing a Boolean
function $f$, it is achieved by proving a nondeterministic lower bound for 
computing $\overline{f}$.
Since the nondeterministic oblivious read-once branching program computing the
median has $T=n$ and $S=O(\log n)$, the core of the median's hardness,
{\sc MedianBit}, and its complement do not have non-trivial lower bounds;
hence current time-space tradeoff lower bound techniques are powerless for 
computing the median.

\ignore{
To discuss further directions, we review the general technique.  It
begins by assuming without loss of generality that the program is leveled and
by dividing it into some number $r$ of bounded length layers. 
For each of the at most $2^{Sr}$ \emph{traces} $\tau=(v_0,v_1,\ldots, v_r)$
consisting of vertices at the boundaries between layers (and at the source and
some sink), define $F_\tau=\land_{i=1}^r F_{\tau i}$ where $F_{\tau i}(x)=1$
iff $x$ is an input that follows a path from $v_{i-1}$ to $v_i$.  
Observing that  the function $f_B$ computed by the branching program
is $\bigvee_{\tau} F_\tau$, to prove a lower bound on branching program size the
method proves that if $h=T/r$ is sufficiently small relative to $n$,
then for any sequence of $F_{\tau i}$ computable by deterministic branching
programs of height $h$, every $F_{\tau}=\land_{i=1}^r F_{\tau i}$ 
must have $\mu(F_\tau^{-1}(1))$ smaller than some $\alpha<1$ 
in order to be consistent with (or close to) the function $f$.
Since $f_B=\bigvee_\tau F_\tau$, one infers that
$2^{Sr} \alpha\ge \mu(f^{-1}(1))$
and hence $S\ge \log_2 (\mu(f^{-1}(1))/\alpha)/r$.
The technique is then applied directly to either $f$ or $\overline f$.

The nondeterministic branching program for 
{\sc MedianBit} (or its complement) can be decomposed as the $\vee$ of $2n$
functions computed by deterministic read-once branching programs
$B_1,\ldots, B_{2n}$, each of size $O(n^3)$,
one for each potential value of the median. 
It is easy to see in this case that for every $B_j$, the
decomposition of $f_{B_j}=\bigvee F_\tau$ into $r$ layers has $n^{O(r)}$ traces
and for each trace $\tau$ and layer $i$, the function $F_{\tau i}$ is computed
by a deterministic branching program of length $n/r$ and so for at least one
$j$ and one such $F_\tau$ we have
$\mu(F_\tau^{-1}(1))\ge \mu(f^{-1}(1))/n^{O(r)}$; hence
$\alpha\ge \mu(f^{-1}(1))/n^{O(r)}$
where $f$ is {\sc MedianBit} or its complement.
Hence the space lower bound of $\log_2 (\mu(f^{-1}(1))/\alpha)/r$ from the
technique is only $O(\log n)$ for $T\ge n$.
}

We conjecture that the lower bound $T=\Omega(n\log\log_S n)$ holds
for finding the median using general non-oblivious algorithms as well as
oblivious and comparison algorithms.

\ignore{
A stronger lower bound technique is necessary to prove this conjecture.   
In particular, the technique for general algorithms just described treats the
decomposition
of $f_B$ into $\bigvee_\tau F_\tau$ by analyzing each $F_\tau$ in isolation,
independent of the fact that the traces partition the input space.
The communication complexity technique that we used for the oblivious case
implicitly takes advantage of this partitioning; it must do so not only
because of the nondeterministic algorithm just described, but also because the 
the lower bound only holds if the number of rounds is limited.
Beyond using the fact that the traces partition the input, another aspect
that the standard proof method does not take advantage of is the fact that the
next node in the trace on a given input must be computed ``on
the fly" after having seen relatively few input values.
}

\section*{Acknowledgements}

We thank Anup Rao for suggesting the improved form of Lemma~\ref{lem:rel} that
we include here.

{
\renewcommand{\baselinestretch}{0.96}
\bibliographystyle{plain} 
\bibliography{theory}
}
\newpage
\appendix
\section{Proof of Lemma~\ref{lem:rel}}
The proof is inspired by that of the parallel repetition theorem.
For $(x,y)$ chosen according to $\calD$, we first design a public coin
protocol in which the players randomly choose $i\in [k]$, and
with small probability of error jointly choose a
random vector $W^{-i}$ of values consisting of exactly one of $x_j$ or $y_j$ for
each $j\ne i$ and a random message $M$ for Alice consistent with those inputs
whose distribution is close to that of Alice's first message.
The players then independently use the public coins to randomly complete their
inputs to be consistent with $\calD$ on each coordinate (and in Alice's
case consistent with the message agreed upon).
The resulting protocol will have expected error at most $\epsilon+\delta$.
We then fix the public coins (and hence all inputs other than $(x,y)$)
to create the claimed deterministic protocol.

Let $X_i$ and $Y_i$ for $i\in [k]$ denote the random variables associated with
the components of the distribution $\calD^k$.
Let $M$ denote the random variable for Alice's first message.
Define the random variable $W_i$ that is $X_i$ with probability 1/2 and $Y_i$
with probability 1/2.   Let $W$ denote the random variable $W_1\ldots W_k$
and $W^{-i}$ denote the variable $W$ with $W_i$ removed.
Then 
\begin{align*}
m_1&\ge H(M)\\
&\ge I(M; X_1 Y_1 \ldots X_k Y_k | W)\qquad\mbox{by definition}\\
&\ge \sum_{i=1}^k I(M; X_i Y_i | W)\\
\noalign{\mbox{since the $X_iY_i$ are conditionally independent given $W$}}
&= \sum_{i=1}^k I(M; X_i Y_i | W_i W^{-i})\\
&= \sum_{i=1}^k \frac{I(M; X_i Y_i | X_i W^{-i})+I(M; X_i Y_i | Y_i W^{-i})}{2}\qquad\mbox{by definition of }W_i\\
&= \sum_{i=1}^k \frac{I(M; Y_i | X_i W^{-i})+I(M; X_i | Y_i W^{-i})}{2}\\
&= \sum_{i=1}^k \frac{I(MW^{-i}; Y_i | X_i)-I(W^{-i}; Y_i | X_i) +I(MW^{-i}; X_i | Y_i )-I(W^{-i}; X_i | Y_i)}{2}\\
\noalign{\mbox{by the chain rule}}
&= \sum_{i=1}^k \frac{I(MW^{-i}; Y_i | X_i)+I(MW^{-i}; X_i | Y_i )}{2}\\
\noalign{\mbox{since $W^{-i}$ is independent of $X_iY_i$.}}
\end{align*}
\begin{sloppypar}
Since $m_1\le \delta^2 k/(8\ln 2)$, it follows that
$\E_{i\in [k]} (I(MW^{-i}; Y_i | X_i)+I(MW^{-i}; X_i | Y_i ))\le \delta^2/(4\ln 2)$.
We use this to derive that in expectation over random choices of $i$, and
$(x,y)$ chosen from $\calD$, 
the distributions $MW^{-i}|X_i{=}x$ and $MW^{-i}|Y_i{=}y$ are both
statistically close to the distribution $MW^{-i}|X_i{=}x,Y_i{=}y$.
We now use the following proposition which follows from Pinsker's inequality.
\end{sloppypar}

\begin{proposition}
\label{prop:pinsker}
Let $P$ and $Q$ be probability distributions.  \\
Then $\E_{q\sim Q} ||P-(P|Q{=}q)||^2\le \frac{\ln 2}{2} I(P;Q)$. 
\end{proposition}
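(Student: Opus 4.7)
The plan is to derive Proposition~\ref{prop:pinsker} as a direct application of Pinsker's inequality combined with the standard characterization of mutual information as an expected KL divergence. The statement only concerns a single pair of random variables, so no induction or amplification is needed.

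First I would invoke Pinsker's inequality in the form appropriate for the paper's conventions. Since elsewhere in the paper $H$ and $I$ are in bits (base $2$), I would use the fact that for any two distributions $\mu,\nu$ on a common domain,
\[
\|\mu-\nu\|^2 \le \tfrac{\ln 2}{2}\, D_{KL}(\mu \,\|\, \nu),
\]
where the factor $\ln 2$ appears because Pinsker is classically stated with the KL divergence in nats, and we are using the base-$2$ convention. I would apply this with $\mu := (P \mid Q{=}q)$ and $\nu := P$ (the marginal) for each fixed $q$ in the support of $Q$, obtaining
\[
\|P-(P|Q{=}q)\|^2 \le \tfrac{\ln 2}{2}\, D_{KL}\bigl((P|Q{=}q)\,\bigm\|\, P\bigr),
\]
using symmetry of total variation on the left-hand side.

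Next I would take expectation over $q \sim Q$ on both sides. The left-hand side becomes exactly the quantity in the proposition. For the right-hand side, I would apply the standard identity
\[
I(P;Q) \;=\; \E_{q\sim Q}\, D_{KL}\bigl((P|Q{=}q)\,\bigm\|\, P\bigr),
\]
which is one of the equivalent definitions of mutual information (it is obtained from $I(P;Q) = H(P) - H(P|Q)$ by writing both entropies as expectations and rearranging). Substituting yields the claimed bound
\[
\E_{q\sim Q}\|P-(P|Q{=}q)\|^2 \le \tfrac{\ln 2}{2}\, I(P;Q).
\]

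There is no real obstacle here; the argument is a one-line application of Pinsker and a definition. The only points requiring minor care are (i) getting the direction of the KL divergence right (the conditional law must appear on the left so that the expectation matches the mutual information identity), and (ii) tracking the $\ln 2$ factor arising from the base-$2$ convention used for entropy and information throughout the paper.
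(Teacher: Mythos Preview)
Your proposal is correct and matches the paper's approach: the paper does not spell out a proof but simply states that the proposition ``follows from Pinsker's inequality,'' and your argument is precisely the standard two-step derivation (Pinsker pointwise, then take expectation and use $I(P;Q)=\E_{q\sim Q} D_{KL}((P|Q{=}q)\,\|\,P)$). Your tracking of the $\ln 2$ factor and the direction of the KL divergence is accurate.
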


It follows that
\begin{align*}
&\E_{(x,y)\sim \calD} ||(MW^{-i}|X_i{=}x)-(MW^{-i}|X_i{=}x,Y_i{=}y)||^2\\
&= \E_{x} \E_{y: (x,y)\sim \calD|X=x} ||(MW^{-i}|X_i{=}x)-(MW^{-i}|X_i{=}x,Y_i{=}y)||^2\\
&\le \frac{\ln 2}{2} \E_{x} I(MW^{-i};Y_i|X_i{=}x) \qquad\mbox{by Proposition~\ref{prop:pinsker}}\\
&=\frac{\ln 2}{2} I(MW^{-i};Y_i|X_i)\qquad\mbox{by definition}.
\end{align*}
The analogous bound shows that the expected square of the statistical distance
between the distributions $MW^{-i}|Y_i{=}y$ and $MW^{-i}|X_i{=}x,Y_i{=}y$ is
at most $\frac{\ln 2}{2} I(MW^{-i};X_i|Y_i)$.

\begin{sloppypar}
Therefore in expectation over choices of $(x,y)$ and $i$, the sum of the
squares of the statistical distances between
the distributions $MW^{-i}|X_i{=}x$ and $MW^{-i}|X_i{=}x,Y_i{=}y$ 
and between $MW^{-i}|Y_i{=}y$ and $MW^{-i}|X_i{=}x,Y_i{=}y$ is at most $\delta^2/8$.
Write $\epsilon_{x,y,i,X}= ||(MW^{-i}|X_i{=}x)-(MW^{-i}|X_i{=}x,Y_i{=}y)||$
and
$\epsilon_{x,y,i,Y}= ||(MW^{-i}|Y_i{=}y)-(MW^{-i}|X_i{=}x,Y_i{=}y)||$.
In particular we have
\begin{align*}
\E_{(x,y)\sim \calD}&\E_{i\in [k]} (\epsilon^2_{x,y,i,X}+\epsilon^2_{x,y,i,Y})\\
&\le \frac{\ln 2}{2}
\E_{(x,y)\sim \calD}\E_{i\in [k]} 
(I(MY^{-1};X_i|Y_i)+I(MW^{-i};Y_i|X_i))\\
&\le \delta^2/8.
\end{align*}
and hence
$$\E_{(x,y)\sim \calD}\E_{i\in [k]} (\epsilon_{x,y,i,X}+\epsilon_{x,y,i,Y})
\le \delta/2.$$
\end{sloppypar}

We now apply Holenstein's Lemma~\cite{holenstein:parallel-journal} to say
that given $x,y$ and $i$, except for a failure probability of at most
$2(\epsilon_{x,y,i,X}+\epsilon_{x,y,i,Y})$, 
Alice and Bob without any communication 
can use the shared random string to agree on a sample $(m,w^{-i})$ from 
$MW^{-i}|X_i{=}x,Y_i{=}y$.
Therefore the expected failure probability is at most $\delta$.

Once Alice and Bob have selected $(m,w^{-i})$, Alice uses private randomness
to sample the remainder of her input from
$X_1\ldots X_k| M{=}m,W^{-i}{=}w^{-i},X_i{=}x$.
Bob independently uses private randomness
to sample the remainder of his input from
\begin{align*}
&Y_1\ldots Y_k| M{=}m,W^{-i}{=}w^{-i},Y_i{=}y 
\\&= Y_1\ldots Y_k| W^{-i}{=}w^{-i},Y_i{=}y
\end{align*}
which only depends on $\calD^k$.

Then Alice and Bob simulate the remainder of the protocol starting with the
second message overall; i.e., Bob's
first message.  The difference in the distribution from $\calD^{[k]}$ on
the result is at most $\delta$ so the expected error is
at most $\epsilon+\delta$.

\end{document}